\begin{document}

\title{% 
  %% Write here the title of your paper, as
  Cache optimized linear sieve }
\maketitle
%% TWO AUTHORS. If there are two authors, please, use the following
%% command and delete the \oneauthor command completely.

\twoauthors{%
  %% Write here the first author's name usinf \href command.
  \href{http://compalg.inf.elte.hu/~ajarai}{Antal J\'ARAI} }{%
  %% Write here the first author's affiliation including maybe his
  %% address. You can use \\ for line breaks.
  \href{http://www.elte.hu}{E\"otv\"os Lor\'and University} }{%
  %% Write here the first author's email.
  \href{mailto:ajarai@moon.inf.elte.hu}{ajarai@moon.inf.elte.hu} }{%
  %% Write here the second author's name.
  \href{http://compalg.inf.elte.hu/~vatai}{Emil VATAI} }{%
  %% Write here the second author's affiliation including maybe his
  %% address. You can use \\ for line breaks.
  \href{http://www.elte.hu}{E\"otv\"os Lor\'nd University} }{%
  %% Write here the second author's email.
  \href{mailto:emil.vatai@gmail.com}{emil.vatai@gmail.com} 
}

%%
%% MORE AUTHORS. For more than two authors, please, use both commands, maybe more than once.
%%

%% Short name of the authors and short title, to be included in heading.
\short{%
%% Write here the short name of authors, using commas.
  A. J\'arai, E. Vatai
}{%
%%Write here the short title
  Cache optimized linear sieve
}

\begin{abstract}
%% Write here the abstract of your paper.
  Sieving is essential in different number theoretical algorithms.
  Sieving with large primes violates locality of memory access, thus
  degrading performance. Our suggestion on how to tackle this problem
  is to use cyclic data structures in combination with in-place
  bucket-sort.

  We present our results on the implementation of the sieve of
  Eratosthenes, using these ideas, which show that this approach is
  more robust and less affected by slow memory.
\end{abstract}

%% Write here your paper using the section command, theorem-like
%% environments as: definition, theorem, lemma, corollary, criterion,
%% example, exercise, notation, problem, proposition, remark,
%% conjecture. All these are numbered consecutively.  For proof you
%% can use the proof environment.

% \tableofcontents

\section{Introduction}
\label{sec:introduction}

In this paper we present the results obtained by implementing the
sieve of Eratosthenes \cite{bressoud} using the methods described at
the 8th Joint Conference on Mathematics and Computer Science in Kom\'arno \cite{macs}. In
the first section, the problem which is to be solved by the algorithm
and some basic ideas about implementation and representation are
presented.  In Section \ref{sec:addr-memory-local}, the methods to speed up the execution are
discussed. In Section \ref{sec:speed-up-algo} the numerical data of the measurement of
run\-times is provided on two different platforms in comparison with
the data found at \cite{tos}.

% \begin{algorithm_acta}
% foreach $(p,q)\in P$ do  \\
% \indent while $q<\max$ do \\
% \indent\indent sieve with $p$;\\
% \indent\indent $q\gets q+p$; \\
% \end{algorithm_acta}

Given an array (of certain size) and a set $P$ of $(p,q)$ pairs, where
$p$ is (usually) a prime and $0\le q<p$ is the offset (integer)
associated with $p$. A sieving algorithm for each $(p,q)\in P$ pair
performs an action on every element of the array with a valid index
$i=q+mp$ (for $m\ge0$ integers).

Sieving with small primes can be considered a simple and efficient
algorithm: start at $q$ and perform the action for sieving, then
increase $q$ by $p$ and repeat. But when sieving with large primes,
larger than the cache, memory hierarchy comes into play. With these
large primes, sieving is not sequential (i. e. $q$ skips great
portions of memory), thus access to the sieve array is not sequential
and this causes the program to spend most of its time waiting to
access memory, because of cache misses.

\subsection{Sieve of Eratosthenes}
\label{sec:sieve-eratosthenes}

The sieve of Eratosthenes is the oldest algorithm for generating
consecutive primes. It can be used for generating primes ``by hand''
but it is also the simplest and most efficient way to generate
consecutive primes of high vicinity using computers. The algorithm is
quite simple and  well-known. Starting with the number $2$,
declare it as prime and mark every even number as a composite
number. The first number, which is not marked is $3$. It is declared
as the next prime, so all the numbers divisible by $3$ (that is every
third number) is sieved out (marked), and so on.

\subsection{Basic ideas about implementation}
\label{sec:basic-ideas-about}

The program finds all primes in an interval $[u,v]\subset\mathbb{N}$
represented in a bit table. The program addresses the issue of memory
locality by sieving the $[u,v]$ in sub\-intervals of predefined size,
called segments, which can fit in the cache, thus every segment of the
sieve table needs to pass through the cache only once. For simplicity
and efficiency the size of segment is presumed to be the power of
two.

\bigskip

Because every even number, except $2$, is a composite, a trivial
improvement is to represent only the odd numbers, and cutting the size
of the task at hand in half.

\subsubsection{Input and output}
\label{sec:parameters}

The program takes three input parameters: the base 2 logarithm of the
segment size denoted by $l$,\footnote{cache size $\approx$ segment
  size $=2^l$ bits $=2^{l-3}$ bytes $=2^{l+1}$ numbers represented,
  because only odd numbers are represented in the bit table.} and
instead of the explicit interval boundaries $u$ and $v$, the ``index''
of the first segment denoted by $f$, and the number of segments to be
sieved denoted by $n$, is given. This means, the numbers between
$u=f2^{l+1}+1$ and $v=(f+n)2^{l+1}$ are sieved. For simpler comparison
with results from \cite{tos}, the exponent of the approximate midpoint
of the interval can also be given as an input parameter instead of
$f$\footnote{Of course, this just relieves the user from the tedious
  task of calculating $f$ by hand for the given exponent of the
  midpoint of the interval, but internally the flow of the program was
  same as if $f$ is given.}.

As for the output, the program stores the finished bit table in a file
in the \texttt{/tmp} directory, with the parameters written in the
file\-name.

\subsection{Sieve table and segments}
\label{sec:sieve-table-segments}

% The data structure, containing the result of the algorithm is a
% simple bit\-table.

\begin{definition}[Sieve table, Segments]
  The sieve table $S$ (for the above gi\-ven parameters) is an array of
  $n2^l$ bits. For $0\le j < n2^l$, the bit $S_j$ represents the odd
  number $2(f2^l+j)+1 \in [u,v]=[f2^{l^+1},(f+n)2^{l+1}]$.  $S_j$ is
  initialized to $0$ (which indicates that $2(f2^l+j)+1$ is prime);
  $S_j$ is set to $1$, if $2(f2^l+j)+1$ is sieved out i.e.\ it is
  composite.

  The $t$-th segment, denoted by $S^{(t)}$ is sub\-table of the $t$-th
  $2^l$ bits of the sieve table $S$, i.e.\ $S^{(t)}_q= S_{t2^l+q}$ for
  $0\le q < 2^l$ and $0\le t <n$.
  
\end{definition}

After $p$ sieves at $S_j$, marking $2(f2^l+j)+1$ as a composite, the
next odd composite divisible by $p$ is the $2(f2^l+j)+1+2p=
2(f2^l+j+p)+1 $, so the index $j$ has to be incremented only by $p$.
That is, not representing the even numbers doesn't change the sieving
algorithm, except the calculation of the offsets (described in Lemma
\ref{lem:q}).

\subsection{Initialization phase}
\label{sec:initialization-phase}

For every prime $p$, the first composite number not marked by smaller
primes, will be $p^2$, i.e.\ sieving with $p$ can start from $p^2$. To
sieve out the primes in the $[u,v]$ interval, only the primes
$p\le\sqrt{v}$ are needed. Finding these primes and calculating the
$q$ offsets, so that $q\ge 0$ is the smallest integer satisfying $p
\mid 2(fs^s+q)+1$ is the initialization phase.  Presumably $\sqrt{v}$
is small ($\sqrt{v}<u$), and finding primes $p<\sqrt{v}$ (and
calculating offsets) can be done quickly.

\begin{definition}
  The set of primes, found during the initialization of the sieve is
  called the base.  $P=\{ p \text{ prime} : 2 < p \le\sqrt{v} \}$
\end{definition}

\section{Addressing memory locality}
\label{sec:addr-memory-local}

Because the larger the prime, the more it violates locality of memory
access when sieving, the basic idea is to treat primes of different
sizes in a different way, and process the sieve table by segments in a
linear fashion, loading each segment in the cache only once and
sieving out all the composites in it.

\subsection{Medium primes}
%\label{sec:medium-primes}

The primes $p<2^l$ are \emph{medium primes}.  Segment-wise sieving
with medium primes is simple: $(p,q)$ prime-offset pairs with
$p\le2^l$ and $0\le q<p$ are stored.  Each prime marks at least one
bit in each segment.  For each prime $p$, starting from $q$, every
$p$-th bit has to be set, by sieving at the offset $q$ and then
incrementing it to $q\gets q+p$ while $q<2^l$.  Now $q$ would sieve in
the next segment, so the offset is replaced with $q-2^l$. The first
offset for a prime $p$ and the given parameters $f$ and $l$ can be
found using the following Lemma.
%
% \begin{algorithm_acta}
%   for each segment $S^{(t)}$ \\
%   \> for each $(p,q)\in P$ ($p<2^l$) \\
%   \>\>while $q<2^l$ \\
%   \>\>\> perform sieve on $S^{(t)}_q$ \\
%   \>\>\> $q\gets q+p$ \\
%   \>\> replace $(p,q)$ with $(p,q-2^l)$ \\
% \end{algorithm_acta}

\begin{lemma}
  \label{lem:q}
  For each odd prime $p$ and positive integers $l$ and $f$, there is a
  unique offset $0\le q < p$ satisfying:
  \begin{equation}
    \label{eq:lem}
    p\mid 2(f2^l+q)+1
  \end{equation}
\end{lemma}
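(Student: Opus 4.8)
The statement is essentially a number-theoretic fact: for an odd prime $p$, the congruence $2(f2^l+q)+1 \equiv 0 \pmod{p}$ has a unique solution $q$ in the range $0 \le q < p$. The natural strategy is to reduce this to the standard fact that a linear congruence $ax \equiv b \pmod{p}$ has a unique solution modulo $p$ whenever $\gcd(a,p)=1$.

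**Key steps.**

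First I would rewrite the divisibility condition \eqref{eq:lem} as a congruence in $q$: $2(f2^l + q) + 1 \equiv 0 \pmod p$, i.e.\ $2q \equiv -(2f2^l + 1) \pmod p$. Second, I would observe that since $p$ is an odd prime, $\gcd(2,p)=1$, so $2$ is invertible modulo $p$; let $2^{-1}$ denote its inverse. Multiplying through gives $q \equiv -2^{-1}(2f2^l+1) \pmod p$, which determines $q$ uniquely modulo $p$. Third, among all integers in this residue class exactly one lies in $\{0,1,\dots,p-1\}$, which gives both existence and uniqueness of the desired offset $q$. I might also note explicitly that $2f2^l = f2^{l+1}$, so the right-hand side can be written as $-2^{-1}(f2^{l+1}+1) \bmod p$, matching the representation of odd numbers used in the sieve table.

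**Main obstacle.**

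There is no real obstacle here; the only things to be careful about are that $p$ being an \emph{odd} prime is exactly what guarantees $2$ is a unit mod $p$ (the hypothesis would fail for $p=2$), and that positivity of $l$ and $f$ plays no essential role beyond making $f2^l$ a well-defined nonnegative integer — the argument works for any integers $f,l$. The proof is a two-line application of the invertibility of $2$ modulo an odd prime, so the write-up should simply make the congruence manipulation explicit and invoke uniqueness of residues.
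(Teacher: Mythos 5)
Your proof is correct: reducing \eqref{eq:lem} to the linear congruence $2q \equiv -(f2^{l+1}+1) \pmod p$ and invoking the invertibility of $2$ modulo an odd prime immediately gives existence and uniqueness of $q$ in $\{0,\dots,p-1\}$. The paper proves the same fact by a slightly different, more hands-on route: it writes $f2^{l+1}=(m-1)p+(p-(2q+1))$, sets $r=f2^{l+1} \bmod p$, and splits on the parity of $r$, giving the explicit formulas $q=(p-r-1)/2$ when $r$ is even and $q=(2p-r-1)/2$ when $r$ is odd, with uniqueness inherited from uniqueness of $r$. In effect the paper constructs the multiplication by $2^{-1}$ by hand via the parity case split, which has the practical benefit that the initialization phase can compute $q$ from a single remainder $r$ without a modular inversion; your argument is shorter and more conceptual, and also makes explicit where the hypothesis that $p$ is odd is used and that positivity of $f$ and $l$ is immaterial. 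Either write-up is a valid proof of the lemma.
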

\begin{proof}
  Rearranging \eqref{eq:lem} gives $f2^{l+1}+2q+1=mp$ for some $m$.
  The integer $m$ has to be odd, because the left hand side and $p$
  are odd.  The equation can further be rearranged to a form, which
  yields a coefficient and something similar to a reminder:
  \begin{equation*}
    f2^{l+1}=(m-1)p+(p-(2q+1)).
  \end{equation*}
  The last term is even, so if the remainder $r=f2^{l+1} \mod p$ is
  even, then $q=(p-r-1)/2$ satisfies $0\le q < p$ and
  \eqref{eq:lem}. If $r$ is odd, then $q=(2p-r-1)/2$ satisfies $0\le q
  < p$ and \eqref{eq:lem}. Because $r$ is unique, $q$ is also unique.
\end{proof}

\subsection{Large primes}
\label{sec:large-primes}

The primes $p>2^l$ are \emph{large primes}. These primes ``skip''
segments i.e.\ if a bit is marked in one segment by the large prime
$p$, (usually) no bit is marked by the prime $p$ in the adjacent
segment.  The efficient administration of large primes is based on the
following observation:

\begin{lemma}\label{lem:kkpo}
  If the prime $p$, which satisfies the condition $k2^l\le p <
  (k+1)2^l$ (for some integer $k\ge0$), marks a bit in $S^{(t)}$, then
  the next segment where the sieve marks a bit (with $p$) is the
  segment $S^{(t')}$ for $t'=t+k$ or $t'=t+k+1$.
\end{lemma}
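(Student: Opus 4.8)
The plan is to pass everything to the language of bit indices of $S$ and reduce the statement to a one-line interval inclusion. First I would invoke the remark at the end of Section~\ref{sec:sieve-table-segments}: once $p$ marks the bit $S_j$ (so $p\mid 2(f2^l+j)+1$), the next bit it marks is $S_{j+p}$, since $2(f2^l+j+p)+1=2(f2^l+j)+1+2p$ is the next odd multiple of $p$. Hence the indices at which $p$ marks bits form an arithmetic progression of common difference exactly $p$, and it suffices to locate $j+p$ among the segments, knowing that $j$ lies in $S^{(t)}$.

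Next I would turn the hypotheses into inequalities. ``$p$ marks a bit in $S^{(t)}$'' means there is such a $j$ with $t2^l\le j\le (t+1)2^l-1$; and $k2^l\le p<(k+1)2^l$ together with $p\in\mathbb{Z}$ gives $k2^l\le p\le (k+1)2^l-1$. Adding the two pairs of bounds, the next marked index $j'=j+p$ satisfies $(t+k)2^l\le j'\le (t+k+2)2^l-2<(t+k+2)2^l$.

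Then I would conclude: since $j'\in[(t+k)2^l,(t+k+2)2^l)$, the unique $t'$ with $t'2^l\le j'<(t'+1)2^l$ is either $t+k$ or $t+k+1$, which is exactly the claim. (If $j'\ge n2^l$ there is simply no next segment and the assertion is vacuous; otherwise the display above pins $t'$ down.)

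As for the main obstacle, there is essentially no hard step here; the only points requiring care are using the exact recurrence ``index increases by $p$'' — rather than by $2p$, which is the value-gap in the unreduced sieve over all integers — and keeping the integer endpoints straight so that the upper estimate lands strictly below $(t+k+2)2^l$ instead of at it. I would also remark briefly that both cases $t'=t+k$ and $t'=t+k+1$ really occur, according to how close $j$ is to the right end of its segment and how close $p$ is to $(k+1)2^l$, so the lemma is sharp.
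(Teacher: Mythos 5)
Your proposal is correct and follows essentially the same route as the paper: both arguments use the fact that in the odd-only representation the marked bit index advances by exactly $p$, and then bound the next index $j+p$ inside $[(t+k)2^l,(t+k+2)2^l)$ to conclude $t'\in\{t+k,t+k+1\}$. The paper phrases this via $t'=\lfloor (t2^l+q+p)/2^l\rfloor$ and the same two-sided estimate, so the difference is purely notational.
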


% Medium primes can be considered as the special case of large primes
% for $k=0$.

\begin{proof}
  If the prime $p$ marks a bit in $S^{(t)}$ then it is the
  $S^{(t)}_q$ bit, for some offset $0\le q < 2^l$. The $S_{t2^l+q}$
  bit is marked first, then the $S_{t2^l+q+p}$ bit, so the index of
  the next segment is $t'=\left\lfloor (t2^l+q+p)/2^l \right \rfloor$, thus 
  \begin{equation*}
    t+k 
    = 
    \frac{t2^l+0+k2^l}{2^l} 
    \le  
    \underbrace{\left\lfloor \frac{t2^l+q+p}{2^l} \right \rfloor}_{=t'}
    < 
    \frac{t2^l+2^l+(k+1)2^l}{2^l} 
    = 
    t+k+2.
  \end{equation*}
\end{proof}

\subsection{Circles and buckets}
\label{sec:circles-buckets}

The goal is, always to have the right primes available for sieving at
the right time.  This is done by grouping primes of the same magnitude
together in so called circles, and within these circles grouping them
together by magnitude of their offsets in so called buckets.

\begin{definition}[Circles and Buckets] 
  \label{def:cb} 
  A circle (of order $k$, in the $t$-th state) denoted by $C^{k,t}$ is
  sequence of $k+1$ buckets $B^{k,t}_d$ (where $0\le d \le k$).  Each
  bucket contains exactly those $(p,q)$ prime-offset pairs, which have
  the following properties:
  \begin{align}
    &k2^l<p<(k+1)2^l           \label{inv1}\\
    &0\le q <\max\{p,2^l\}     \label{inv2}\\
    &p\mid 2\bigl((f+t+d-b+k+1)2^l+q\bigr)+1
    &\textrm{ if }  0\le d<b    \label{inv3}\\
    &p\mid 2\bigl((f+t+d-b)2^l+q\bigr)+1 
    &\textrm{ if }b\le d\le k   \label{inv4}
  \end{align}
  where $b= t \mod (k+1)$ is the index of the current bucket.
\end{definition}

$(p,q) \in C^{k,t}$ means that there is an index $0\le d\le k$ for
which $(p,q)\in B^{k,t}_d$ and $p\in C^{k,t}$ means that there is an
offset $0\le q < 2^l$ for which $(p,q)\in C^{k,t}$.

% NOTE: current bucket name should be in def maybe?
As the state $t$ is incremented $b$ changes from $0$ to $k$
cyclically.  This can be imagined as a circle turning through $k+1$
positions, justifying its name.  Also, each bucket contains all the
right primes with all the right offsets, that is when it becomes the
current bucket, it will contain exactly those prime-offsets which are
needed for sieving the current segment.

Circles and buckets can be defined for arbitrary $k,t\in\mathbb{N}$,
but only $0\le k \le\lfloor\max P/2^l\rfloor=K$ and $0\le t<n$
are needed. 
%
%The following theorem justifies these boundaries and demonstrates the
%correctness of the definition, but first, a useful lemma is in order.

\begin{theorem}\label{thm:big}
  For each $p\in P$ there is a unique $0\le k\le K$ and for each state
  $t$, a unique $0\le d \le k$ and offset $q$ such that $(p,q) \in
  B^{k,t}_d$.
\end{theorem}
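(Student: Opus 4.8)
The plan is to establish three things in turn: that the order $k$ is forced by $p$, that for a fixed state $t$ a pair $(d,q)$ with $(p,q)\in B^{k,t}_d$ exists, and that this pair is unique. For the first, invariant \eqref{inv1} alone suffices: since $p\in P$ is an odd prime and $l\ge 1$, the number $2^l$ does not divide $p$, so there is exactly one integer $k\ge 0$ with $k2^l<p<(k+1)2^l$, namely $k=\lfloor p/2^l\rfloor$; both inequalities are strict because $p$ is odd while $k2^l$ and $(k+1)2^l$ are even. From $p\le\max P$ we get $k\le\lfloor\max P/2^l\rfloor=K$, so $0\le k\le K$, and this $k$ depends only on $p$.

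For existence, fix this $k$ and a state $t$, and put $b=t\bmod(k+1)$. Reading off \eqref{inv3} and \eqref{inv4}, the reference segments of $B^{k,t}_0,\dots,B^{k,t}_k$ are the $k+1$ consecutive segments $S^{(t)},\dots,S^{(t+k)}$, occurring in the cyclic order $b,b+1,\dots,k,0,\dots,b-1$. Let $j^*$ be the least index with $j^*\ge t2^l$ and $p\mid 2(f2^l+j^*)+1$; consecutive such indices differ by exactly $p<(k+1)2^l$, so $t2^l\le j^*<t2^l+p<(t+k+1)2^l$ and hence $\sigma:=\lfloor j^*/2^l\rfloor\in\{t,\dots,t+k\}$. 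I would then take $d:=(b+\sigma-t)\bmod(k+1)$ and $q:=j^*-\sigma2^l\in[0,2^l)$. Splitting on whether $d\ge b$ (and applying \eqref{inv4}) or $d<b$ (and applying \eqref{inv3}) --- the two cases differing only according to whether the cyclic counter has wrapped past $k$ --- one checks that the reference segment of $B^{k,t}_d$ is $S^{(\sigma)}$, whereupon \eqref{inv3}/\eqref{inv4} reduce to $p\mid 2(f2^l+j^*)+1$, true by construction, while $0\le q<2^l\le\max\{p,2^l\}$ gives \eqref{inv2}. Hence $(p,q)\in B^{k,t}_d$.

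The last point is unicity. For a fixed admissible $d$ the congruence in \eqref{inv3} or \eqref{inv4} has the form $p\mid 2(N+q)+1$ with $N$ the nonnegative integer $(f+t+d-b)2^l$ or $(f+t+d-b+k+1)2^l$, and the argument of Lemma \ref{lem:q} --- which uses only that $N$ is an integer and $p$ is odd --- yields a unique $q$ with $0\le q<p$, hence the only one compatible with \eqref{inv2}. I expect the genuine difficulty to be unicity of $d$: \eqref{inv3}--\eqref{inv4} by themselves fix $p$'s offset only relative to each bucket's own reference segment, so the argument must single out the distinguished bucket --- the one whose reference segment $S^{(\sigma)}$ carries $p$'s next multiple at or after segment $t$ --- and exclude the others. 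This is exactly where Lemma \ref{lem:kkpo} is indispensable: since two successive multiples of $p$ lie $k$ or $k+1$ segments apart, that next multiple sits in a uniquely determined one of the $k+1$ segments $S^{(t)},\dots,S^{(t+k)}$ spanned by $C^{k,t}$, which fixes $\sigma$, and with it $d$. The rest --- keeping track of the cyclic shift of $b$ through $0,\dots,k$ and of the boundary indices $d=0$ and $d=k$ --- is tedious but routine bookkeeping.
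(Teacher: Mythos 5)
Your identification of $k$ and your existence argument are sound, and the existence route is genuinely different from the paper's: you construct, for each state $t$ directly, the least index $j^*\ge t2^l$ with $p\mid 2(f2^l+j^*)+1$ and place $p$ in the bucket whose reference segment is $S^{(\lfloor j^*/2^l\rfloor)}$, whereas the paper argues by induction on $t$, starting from the decomposition $q'=d2^l+q$ at $t=0$ (via Lemma \ref{lem:q}) and checking that the data structure's update rule (stay in the current bucket or move to the previous one, governed by Lemma \ref{lem:kkpo}) preserves \eqref{inv1}--\eqref{inv4}. For existence alone your direct construction is cleaner.

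The gap is exactly where you yourself located the difficulty: uniqueness of $d$, and the appeal to Lemma \ref{lem:kkpo} does not close it. First, under the literal reading of \eqref{inv2} that you adopt ($0\le q<p$ for a large prime), your own uniqueness-of-$q$ argument shows that for \emph{every} $d\in\{0,\dots,k\}$ the congruence in \eqref{inv3}/\eqref{inv4} has a solution $q_d\in[0,p)$, so $(p,q_d)$ meets all the listed conditions for all $k+1$ buckets at once; nothing in the invariants singles out your $\sigma$, so no ``bookkeeping'' can exclude the other buckets. Second, even if offsets are restricted to $[0,2^l)$ (which is what the implementation stores, and what the paper tacitly uses when it writes $q'=d2^l+q$ and when its medium-prime case reads \eqref{inv2} as $0\le q<p$), the window $S^{(t)},\dots,S^{(t+k)}$ can contain \emph{two} multiples of $p$: by Lemma \ref{lem:kkpo} consecutive multiples lie $k$ or $k+1$ segments apart, and when the next multiple falls in $S^{(t)}$ itself with a gap of $k$, the following one falls in $S^{(t+k)}$, so two distinct buckets satisfy \eqref{inv3}/\eqref{inv4} with offsets below $2^l$ (for instance $l=4$, $p=37$, $k=2$, $f=0$, $t=3$, $b=0$: both $(37,7)$ for $d=0$, since $2(48+7)+1=111=3\cdot37$, and $(37,12)$ for $d=2$, since $2(80+12)+1=185=5\cdot37$). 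So uniqueness of $d$ cannot be derived from \eqref{inv1}--\eqref{inv4} alone, which is why the paper does not attempt your exclusion argument: its induction constructs one specific assignment (the one the algorithm actually maintains, fixed at $t=0$ by the unique decomposition $q'=d2^l+q$ and propagated by the update rule), and that constructive, state-by-state reading is the sense in which Theorem \ref{thm:big} holds. To repair your proof you would either have to strengthen the defining conditions (e.g.\ require $q<2^l$ and that $q$ refer to the \emph{least} unsieved index at or after $t2^l$), or follow the paper and prove the statement as an invariant of the algorithm rather than as a consequence of \eqref{inv1}--\eqref{inv4}.
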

\begin{proof}
  %%% k=\floor{p/2^l} and independent of i
  For every $p$ prime, dividing \eqref{inv1} with $2^l$ gives
  $k=\lfloor p/2^l \rfloor$, and if $p \in P$, then $p\le \max P$, so
  $\lfloor p/2^l \rfloor \le \lfloor \max P / 2^l \rfloor=K$,
  therefore $0\le k\le K$.  For each $p$, \eqref{inv1} is true
  independent of the state $t$.
  
  For each $t$, a unique offset $q$ satisfying \eqref{inv2} and a
  unique index $0\le d \le k$ satisfying \eqref{inv3} and \eqref{inv4}
  has to be found . It should be noted that the precondition of
  \eqref{inv3} and \eqref{inv4} are mutually exclusive, that is an
  index $d$ satisfies only one of the two preconditions, and only that
  one has to be proven.

  %%% $p<2^l$  ($\forall i$)
  Medium primes, that is $p<2^l$ is the special case of $k=0$.
  $C^{0,t}$ has only one bucket with the index $d=b=0$, so the
  precondition of \eqref{inv3} is always false.  \eqref{inv2} is
  equivalent to $0\le q < p$ since $p<2^l$.  \eqref{inv4} is
  equivalent to $p\mid 2\bigl((f+t)2^l+q\bigr)+1$ because $0\le b \le
  d$, that is $b=0=d$.  Lemma \ref{lem:q} for the prime $p$ and
  integers $l$ and $t+f$ shows that there is an integer $q$ which
  satisfies \eqref{inv2} and \eqref{inv4}.

  %%% $p>2^l$
  
  For $p>2^l$, the proof is by induction.  If $t=0$ is fixed, then
  $b=0$ is the current bucket's index. The precondition of
  \eqref{inv3} is false and \eqref{inv4} is equivalent to $p\mid
  2\bigl((f+d)2^l+q\bigr)+1$.  Lemma \ref{lem:q} for the prime $p$ and
  integers $l$ and $f$ gives a $q'$ which satisfies $p\mid
  2(f2^l+q')+1$ and $0\le q' < p$. Dividing $q'$ by $2^l$ gives
  $q'=d2^l+q$, where $d$ and $q$ are unique and satisfy \eqref{inv2}
  and \eqref{inv4}.

  %%% induction

  If the statement holds for $t\ge 0$, then there is an index $0\le
  d\le k$ and an offset $0\le q < 2^l$ such that $(p,q)\in B^{k,t}_d$.
  The current bucket is $b=t \mod (k+1)$, and the statement will be
  proven for the next state $t'=t+1$ with $b'=(b+1) \mod (k+1)$ as the
  index of the ``next'' current bucket.
  
  The first case is $d \neq b$.  It can be shown that incrementing the
  state, the prime remains in the same bucket with the same offset,
  i.e.\ $(p,q)\in B^{k,t'}_d$. If $b<k$, then $b'=b+1\le k$ holds,
  that is $t'-b'=(t+1)-(b+1)=t-b$ so \eqref{inv3} and \eqref{inv4}
  remain the same, except for the preconditions.  But since $d \neq
  b$, $0\le d < b < b'$ or $b < b' \le d$ will still remain true. If
  $b=k$, then $0\le d < b=k$ and $b'=0=b-k$, so the precondition of
  \eqref{inv3} is false, and \eqref{inv4} becomes:
  \begin{equation*}
    p \mid 
    \bigl( (f+(t+1)+d-(-k))2^l +q \bigr) +1 = 
    \bigl( (f+t+d+k+1)2^l +q \bigr) +1 
  \end{equation*}
  for $0=b' \le d$. Since \eqref{inv3} was true for $d$ and $t$, now
  \eqref{inv4} is true for $d$ and $t+1$.

  The second case is when $d = b$, and it can be shown that
  incrementing the state the prime remains in the same bucket or goes
  into the previous one (modulo $(k+1)$) with a different offset.  If
  $d=b$, $d$ satisfies the precondition of \eqref{inv4}, that is $p
  \mid 2\bigl((f+t)2^l+q\bigr)+1$ is true.  The next odd number
  divisible by $p$ can be obtained by incrementing the offset by $p$.
  As seen in Lemma \ref{lem:kkpo} $q+p$ can be written as
  $q+p=k'2^l+q'$, where $k'=k$ or $k+1$ and $0\le q' < 2^l$.  With
  incrementing the offset by $p$ for $t$ \eqref{inv4} gives:
  \begin{equation}
    p\mid 2\bigl( (f+t+k')2^l+q'\bigr)+1. \label{eq:bj}
  \end{equation}
  Let $d'$ be $d+(k'-1) \mod (k+1)$, that is $d' \equiv d \pmod{k+1}$
  or $d' \equiv d-1 \pmod{k+1}$.  The precondition of \eqref{inv4} for
  the next state $t'$ is true if $b=k$ (then $b'=0$ and $d'=k$ or
  $k-1$) or if $b=0$ and $k'=k$ (then $b'=1$ and $d'=k$). If these
  values are plugged in \eqref{inv4} for $t'$, i.e.\ $p \mid
  \bigl((f+t'+d'-b')2^l+q'\bigr)+1 $, equation \eqref{eq:bj} is
  obtained, which is true. The preconditions of \eqref{inv3} are
  satisfied for every other case, that is, when $0<d=b<k$ (then $0\le
  d'<b'\le k$) or $b=0$ and $k'=k+1$ (then $b'=1$ and $d'=0$). Again,
  by plugging these values in \eqref{inv3} for $t'$, that is $p \mid
  \bigl((f+t'+d'-b'+k+1)2^l+q'\bigr)+1$ equation \eqref{eq:bj} is
  obtained, which is also true.
\end{proof}

As a consequence, if it doesn't cause any confusion, the state may be
omitted from the notation, because each prime with its offset is
maintained only for the current state.  As the program iterates through
states, the primes may ``move'' between buckets, and offsets usually
change.

\medskip

The following Corollary shows, that sieving with circles and buckets
sieves out all composites marked by large primes (sieving with medium
primes is more or less trivial).
\begin{corollary}\label{cor:complete}
  For each $p>2^l$ and odd $i'\in [u,v]$ satisfying $p\mid i'$ there
  exists a unique state $t$ and an offset $q$, so that $(p,q)$ is in
  the current bucket of the circle to which $p$ belongs to.
\end{corollary}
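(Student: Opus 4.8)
The plan is to derive this from Theorem~\ref{thm:big} by tracking, as the state advances, which odd multiple of $p$ the current bucket ``names''. First I would fix coordinates for $i'$: being odd and in $[u,v]=[f2^{l+1}+1,(f+n)2^{l+1}]$, it is the value of a unique sieve bit $S_j$ with $0\le j<n2^l$; writing $j=T2^l+q$ with $0\le q<2^l$ and $0\le T<n$, the number $i'$ is represented by $S^{(T)}_q$ and $p\mid 2((f+T)2^l+q)+1$. By Theorem~\ref{thm:big}, $p$ belongs to the unique circle of order $k=\lfloor p/2^l\rfloor$, with $k\ge1$ since $p>2^l$. I claim the state and offset asserted to exist are exactly this $T$ and $q$.

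Now the ``carried multiple'' picture. Whenever $p$ is the element of the current bucket at a state $t$, say as $(p,\omega)$ with $0\le\omega<2^l$, invariant~\eqref{inv4} (the case $d=b$) gives $p\mid 2((f+t)2^l+\omega)+1$, so the current bucket names the odd multiple $m(t):=2((f+t)2^l+\omega)+1$, which lies in segment~$t$. I would establish: (i) by Lemma~\ref{lem:q} at state $0$, the first value of $m$ is the least odd multiple of $p$ that is $\ge u$; (ii) by the offset update $\omega\mapsto\omega+p$ in the $d=b$ case of the proof of Theorem~\ref{thm:big} (equation~\eqref{eq:bj}), each next value of $m$ is $2p$ larger than the previous; (iii) the sequence of values of $m$ does not stop while an odd multiple of $p$ that is $\le v$ is still unnamed, since that multiple would sit in a segment of index $<n$, which is processed at a state at which, by the circle invariants, $p$ is again in the current bucket. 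Together, (i)--(iii) say that the values taken by $m$, over all states, are precisely the odd multiples of $p$ in $[u,v]$, and each is named at a single state --- the index of the segment containing it.

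Applying this to $i'$: it is an odd multiple of $p$ in $[u,v]$, so it equals $m(t)$ for the unique state $t$ equal to its segment index, namely $t=T$, and there the current bucket holds $(p,\omega)$ with $2((f+T)2^l+\omega)+1=i'=2((f+T)2^l+q)+1$ and $0\le\omega,q<2^l$, forcing $\omega=q$. Uniqueness is then immediate: if $(p,q')$ were in the current bucket at a state $t'$ and named $i'$, then the bit it marks, $S_{t'2^l+q'}$, would be the bit of $i'$, so $t'2^l+q'=T2^l+q$ with $0\le q',q<2^l$, whence $t'=T$, $q'=q$. (Equivalently, one can argue purely at state $T$: Theorem~\ref{thm:big} gives $(p,\tilde q)\in B^{k,T}_d$ with $0\le\tilde q<\max\{p,2^l\}=p$; combining the bucket invariant with $p\mid 2((f+T)2^l+q)+1$ shows $\delta2^l+\tilde q-q\in\{0,p\}$, where $\delta=(d-b)\bmod(k+1)\in\{0,\dots,k\}$ and $\delta=0$ iff $d=b$; the value $0$ immediately yields $\delta=0$ and $\tilde q=q$, and excluding the value $p$ is precisely step (iii).)

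The hard part will be step (iii): that after a ``hit'' the bucket-update always deposits $p$ in the bucket that becomes current exactly when the segment containing its next multiple is processed. This is in substance the $d=b$ case of Theorem~\ref{thm:big}, so it can be cited; what is left is only the boundary bookkeeping at segments $0$ and $n-1$, which is routine. A minor point: an odd multiple $i'<p^2$ is, in the running algorithm, sieved out through a smaller prime rather than through $p$; restricting to $i'\ge p^2$ (the multiples actually assigned to $p$) removes any ambiguity in the statement.
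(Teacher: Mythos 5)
Your proof is correct in substance, but it takes a genuinely different --- and much heavier --- route than the paper, and the difference is worth seeing. You treat bucket membership \emph{operationally}: you track the pair $(p,\omega)$ through successive states using the induction inside the proof of Theorem~\ref{thm:big} (initial multiple from Lemma~\ref{lem:q}, advance by $2p$ at each hit, redeposit into the bucket that becomes current exactly at the segment of the next multiple), and conclude that the multiples ``named'' in current buckets are precisely the odd multiples of $p$ in $[u,v]$. The paper does none of this, because by Definition~\ref{def:cb} a bucket contains \emph{exactly} those pairs satisfying \eqref{inv1}--\eqref{inv4}, so membership is a congruence condition that can simply be checked: writing $i'=2(f2^l+i)+1$, $t=\lfloor i/2^l\rfloor$, $q=i-t2^l$, conditions \eqref{inv1}, \eqref{inv2} and \eqref{inv4} with $d=b$ hold, hence $(p,q)\in B^{k,t}_b$, and uniqueness is immediate since $t$ and $q$ are determined by $i'$ --- essentially your own parenthetical ``argue purely at state $T$'' alternative. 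In particular, the case $\delta 2^l+\tilde q-q=p$ that you believe must be excluded by your step (iii) never has to be excluded at all: the corollary asks about the defining congruences, not about which pair the running algorithm happens to hold, and the consistency of the definition with the single stored pair is exactly what Theorem~\ref{thm:big} already guarantees. What your route buys is an operational insight (that the updates of Section~\ref{sec:buckets} really maintain the invariant), at the cost of leaning on the \emph{proof}, not the statement, of Theorem~\ref{thm:big} for step (iii) and leaving boundary bookkeeping as ``routine''; note also two small slips: at state $0$ a large prime sits in bucket $d$, not necessarily the current one, so your (i) is first realized only at state $t=d$; and the closing caveat about restricting to $i'\ge p^2$ is unnecessary, since in the segmented sieve $p$ marks all of its odd multiples in $[u,v]$ regardless of smaller factors (here $u>\sqrt v\ge p$ anyway).
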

\begin{proof}
  Let $i'$ be represented by $S_i$ for $0\le i < n2^l$, that is
  $i'=2(f2^l+i)+1$.  The statement is true for $t=\lfloor
  i/2^l\rfloor$, because then $i=t2^l+q$, $b = t \mod (k+1)$, and
  substituting $d$ with $b$ in \eqref{inv4}, the equation $p\mid
  2\bigl( (f+t)2^l+q)+1=2(f2^l+i)+1$ is obtained.
\end{proof}

The proof of Theorem \ref{thm:big} could have been simpler, but the
proof by induction gives some insight on how the circles and buckets
work and behave, giving some idea about how to implement them. This
behavior is explicitly stated in the following Corollary.
\begin{corollary}\label{cor:subset}
  For each state $t$, each order $k$, $b=t\mod(k+1)$ and
  $b'=(t+k)\mod(k+1)$, if $(p,q)\in B^{k,t+1}_b$ then $(p,q')\in
  B^{k,t}_b$ for some offset $q'$, and $B^{k,t}_{b'}\subset
  B^{k,t+1}_{b'}$ and for every $b\neq d\neq b'$ and $d\neq d'$
  $B^{k,t}_d=B^{k,t+1}_d$.
\end{corollary}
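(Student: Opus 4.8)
The plan is to read the corollary off the relocation rule already established in the induction step of Theorem~\ref{thm:big}: when the state advances from $t$ to $t+1$, a pair $(p,q)\in B^{k,t}_d$ either (first case, $d\ne b$) stays in $B^{k,t+1}_d$ with the \emph{same} offset, or (second case, $d=b$) moves, with a new offset $q'$ determined by $q+p=k'2^l+q'$ and $k'\in\{k,k+1\}$, into one of at most two buckets. So the first thing I would do is translate the corollary's indices into that proof's language. Since $b=t\bmod(k+1)$, the index $b'=(t+k)\bmod(k+1)$ equals $(b-1)\bmod(k+1)$; using the explicit sub-cases $b=k$, $b=0$, $0<b<k$ of the second case one checks that the destination bucket of a migrating pair is always either $b$ itself or $b'$. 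I would also set aside the degenerate order $k=0$, where the circle has a single bucket and the claim reduces to the (trivial) medium-prime bookkeeping, and assume $k\ge1$, so that $b\ne b'$.

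With this dictionary in hand the three assertions fall out. For an index $d\notin\{b,b'\}$: no pair leaves $B_d$ (only $B_b$ is a source), no pair enters $B_d$ (migrants land only in $b$ or $b'$), and offsets inside $B_d$ are unchanged by the first case, so $B^{k,t}_d=B^{k,t+1}_d$; this is the last clause of the corollary, the extra exclusion ``$d\ne d'$'' being subsumed since the only buckets whose contents change are $b$ and $b'$. For $d=b'$: because $b'\ne b$, every $(p,q)\in B^{k,t}_{b'}$ is governed by the first case and survives with the same offset, whence $B^{k,t}_{b'}\subseteq B^{k,t+1}_{b'}$, possibly augmented by migrants from $B_b$. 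For $d=b$: if $(p,q)\in B^{k,t+1}_b$, apply Theorem~\ref{thm:big} at state $t$ to obtain the unique bucket index $d_0$ and offset $q'$ with $(p,q')\in B^{k,t}_{d_0}$; were $d_0\ne b$, the first case would force $p$ to remain in $B^{k,t+1}_{d_0}\ne B^{k,t+1}_b$, a contradiction, so $d_0=b$ and $(p,q')\in B^{k,t}_b$, which is the first assertion.

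I expect the only genuine friction to be the bookkeeping in the first step: verifying that a migrating pair never lands outside $\{b,b'\}$, because the proof of Theorem~\ref{thm:big} phrases the second case through an auxiliary index $d'$ and the \emph{next} current bucket, and this has to be reconciled with the ``previous-bucket'' index $b'=(t+k)\bmod(k+1)$ together with the wrap-arounds $b=k\mapsto b'=0$ and $b=0\mapsto b'=k$. Once that identification is fixed, nothing has to be recomputed — every congruence has already been checked in the proof of Theorem~\ref{thm:big}, and the corollary is merely its set-theoretic restatement.
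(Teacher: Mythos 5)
Your proof is correct and follows essentially the same route as the paper: the paper's own proof is just a short remark reading the corollary off the case analysis in the induction step of Theorem~\ref{thm:big} (buckets other than $b$ and $b'$ are untouched, migrants from the current bucket land either there or in the previous bucket). Your version merely spells out the details the paper leaves implicit, in particular the identification $b'=(b-1)\bmod(k+1)$, the use of uniqueness from Theorem~\ref{thm:big} for the $d=b$ clause, and the degenerate case $k=0$.
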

The first statement says that with respect \emph{only} to primes
$B^{k,t+1}_b \subset B^{k,t}_b$.
\bigskip
\begin{proof}
  The index $b'$ refers to the current bucket in the previous state and as seen in the 
  remarks in the proof of Theorem \ref{thm:big}, iterating from state
  $t$ to $t+1$ leaves the buckets with indexes $d\neq b$ and $d\neq
  b'$ untouched, and some primes with new offsets are left in the
  current bucket while others are put in the previous one.
\end{proof}

% Modus operandi

\subsection{Modus operandi}
\label{sec:modus-operandi}

The goal is to perform a segment-wise sieve:

Medium primes belonging to $C^{0}$ are a special case, and they sieve
at least once in a segment.  The $t$-th state of $C^0$ contains all
medium primes with the smallest offsets for sieving in the $S^{(t)}$
segment.  For a prime in $C^0$, after sieving with it the offset is
replaced with the smallest offset for sieving the next segment
$S^{(t+1)}$.  After sieving with all medium primes, $C^0$ is in the
$t+1$-th state.  This is implemented in a single loop, iterating
through all medium primes.

The circle $C^k$ ($k>0$), in the $t$-the state, for primes between
$k2^l$ and $(k+1)2^l$, has the prime-offset pairs, needed for sieving
$S^{(t)}$ in the current bucket.  After sieving with all these
primes, the circle is in it's next state, with offsets replaced, and
some primes moved to the previous bucket, ready for sieving
$S^{(t+1)}$.  Sieving large primes is implemented via two embedded
loops, the outer iterating through circles by their order, covering all
primes, and the inner loop iterating through the primes of the current
bucket of the current circle.

The above two procedures are called in a loop for segment $S^{(t)}$,
iterating from $t=0$ to $n-1$.  Corollary \ref{cor:complete} shows,
that the primes for sieving the $t$-th segment are in the current
buckets of circles in $t$-th state, so this procedure performs the
sieve correctly.

\subsection{Implementation}
\label{sec:implementation}

For sequential access, all prime-offset pairs, buckets and circles are
stored as linear arrays: the array of prime-offset pairs is denoted by
$(\hat{p}_i,\hat{q}_i)$, the array of buckets denoted by $\hat{b}_i$
and the array of circles denoted by $\hat{c}_i$ ($i\in\mathbb{N}$).

\begin{figure}[h]
  \centering
  \includegraphics[scale=0.8]{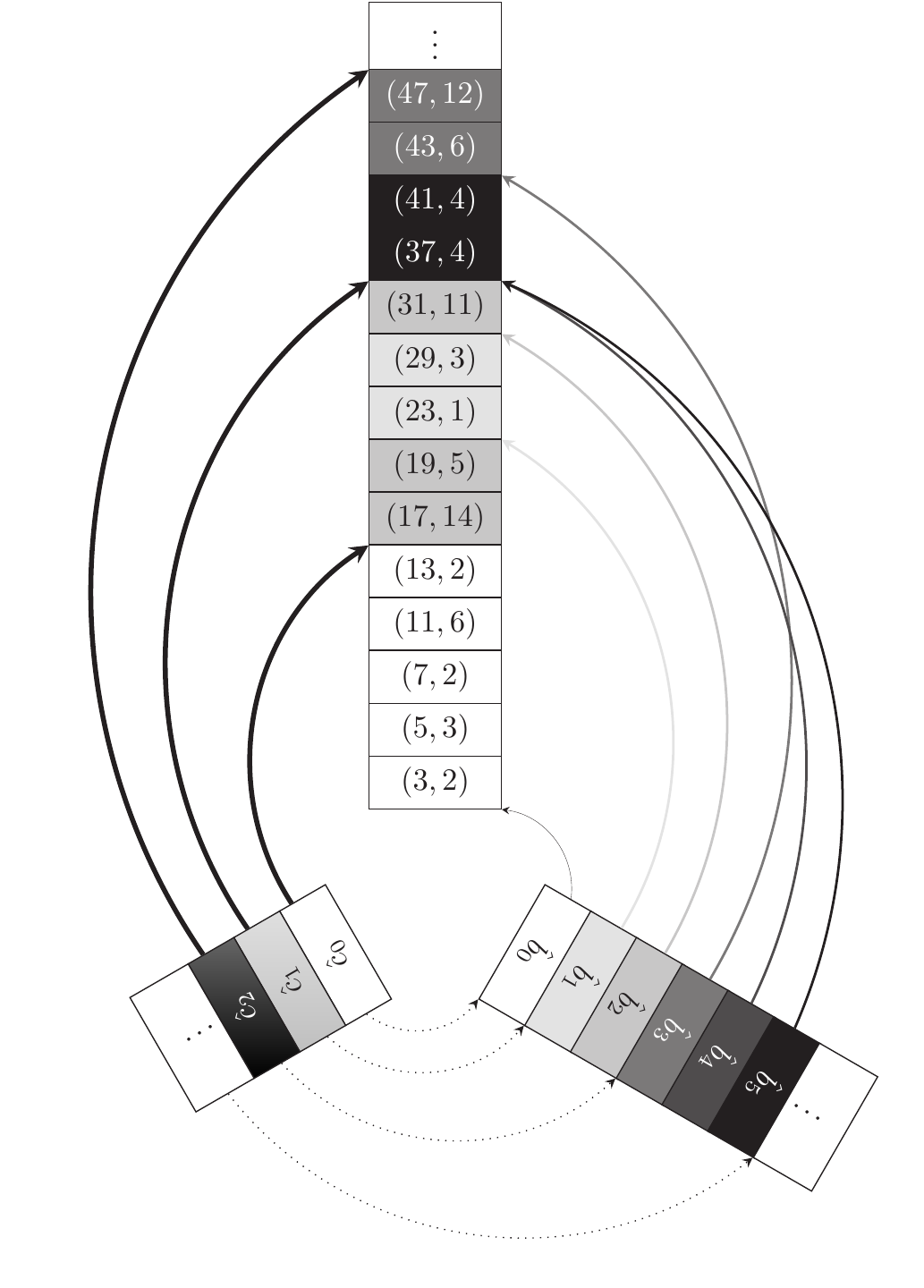}
  \caption{Array of circles, buckets and primes}
  \label{fig:fc}
\end{figure}

\subsubsection{Array of circles}
\label{sec:circles}

% prime after last - end pointer
% first bucket index - calculated
% broken bucket
$\hat{c}_k$ is the data structure (C \texttt{struct}) implementing the
circle $C^k$.  It is responsible for most of the administration of the
associated primes and buckets.  Of course memory to store $K+1$
circles is allocated.

In the implementation, primes of one circle are a continuous part of
the array of primes.  It was convenient to store the end-pointers of
circles, i.e.\ a pointer to the prime after the last prime in the
circle. So the medium primes are all primes before the up to but not
including the prime at the end-pointer of $\hat{c}_0$, and all primes
in $C^k$ are the primes from the end-pointer of $\hat{c}_{k-1}$ up to
but not including the prime pointed to by the end-pointer of
$\hat{c}_k$.  Since the primes are generated in an ascending order,
these end-pointers can be determined easily, and they don't change
during the execution of the program.

The circle $\hat{c}_k$ maintains the index of the current bucket. It
is incremented by one (modulo $k+1$), that is $b\gets b+1$, if $b<k$
or $b\gets 0$ if $b=k$ assignment is performed after sieving with
primes from the circle. Circle need to maintain the index of the
broken bucket explained in section \ref{sec:broken-bucket}.

The circle $\hat{c}_k$ could also maintain a pointer, to the first
bucket $B^k_0$ represented by $\hat{b}_{k(k+1)/2}$ in the buckets
array, but it is not necessary because it can be calculated from $k$.
There is a more efficient solution if the circles are processed with
ascending orders: The starting bucket of $\hat{c}_1$ is $\hat{b}_1$
and this is stored as a temporary pointer. For every circle $\hat{c}_{k+1}$ the 
starting bucket is at $k+1$ buckets after the first bucket of the
previous circle $\hat{c}_k$, so when finished with circle $\hat{c}_k$,
this pointer has to be increased by $k+1$.

In Figure \ref{fig:fc} the value $l$ is $4$ so the cache size is $16$. $C^0$ has a white 
background as well as the bucket and primes associated with it. $C^1$
is light gray with black text, with two different shades for the two buckets and primes in
them, and in a similar way $C^2$ is black with white text and slightly lighter shades of gray for
the buckets and primes. The end-pointers are drawn as thick arrows,
indicating that they don't move during the execution.  The dotted
lines are the calculable pointers to the first buckets.

\subsubsection{Array of buckets}
\label{sec:buckets}

All buckets are stored consecutively in one array, i.e.\ the bucket
$B^k_d$ of circle $C^k$ for $0\le d \le k$ is represented by $\hat{b}
_{k(k+1)/2+d}$. There are $K+1$ circles, with $k+1$ buckets for each
$0\le k \le K$, so memory for storing $(K+1)(K+2)/2$ buckets needs to
be allocated.

The value of each $\hat{b}_d$ is the index
(\texttt{uint32\textunderscore{}t}) of the first prime-offset pair
which belongs to the bucket $\hat{b}_d$.  Primes that belong to one
bucket are also in a continuous part of the primes array, so the bucket
$B^k_d$ contains the primes $\hat{p}_i$ (and the associated offsets
$\hat{q}_i$) for $\hat{b}_{d'}\le i < \hat{b}_{d''} $ where
$d'=k(k+1)/2+d$ and $d''=k(k+1)/2 + (d+1)\mod(k+1)$.  The  broken
bucket is an exception to this.  Empty buckets are represented with
entries in the buckets array having the same value, i.e.\ $B^k_d$ is
empty if $\hat{b}_{d'}=\hat{b}_{d''}$ for $d'= k(k+1)/2+d$ and
$d''=k(k+1)/2+(d+1)\mod(k+1)$, e.g.\ $\hat{b}_4$ is empty in Figure
\ref{fig:fc}.

Buckets are set during the initialization, but change constantly
during sieving. First, using Lemma \ref{lem:q} an offset $0\le q' < p$
is found for each $p$ prime. All prime-offset pairs of the circle
$C^k$ are sorted in ascending offsets.  Similarly, as primes are
collected in circles, within one circles the offsets are collected in
buckets.  $B^k_d$ contains all prime-offset pairs, so that $d2^l\le q'
< (d+1) 2^l$, but instead $q'$, $0\le q=q'-d2^l<2^l$ is stored.

For each circle $C^k$, for each pair $(p,q)\in B^k_b$, the bit with
index $q$ in the current segment is set.  After that, the new offset
$q' = q+p$ is calculated, which is, because of Lemma \ref{lem:kkpo},
either $k2^l\le q' < (k+1)2^l$ or $(k+1)2^l\le q' < (k+2)2^l$.  In the
former case $q'-k2^l$ is stored in the previous bucket (modulo $k+1$),
or in the latter case $q'-(k+1)2^l$ is stored in the current bucket,
as described in Corollary \ref{cor:subset}.

This can be implemented efficiently by keeping copies of a prime from the 
beginning and another prime from the end of the bucket.  That way, in
the first case ($k$ segments were skipped), the prime-offset pair in
the beginning of the bucket is overwritten and the value in the
buckets array indicating the beginning of the bucket is incremented,
thus putting the replaced, new prime-offset pair, in the previous
bucket.  In the second case ($k+1$ segments were skipped), the
prime-offset pair in the end of the bucket is overwritten with the new
prime-offset pair and it stays in the current bucket.  After replacing
a pair in one of the ends (the beginning or the end) of the bucket,
the next prime is read from that end of the bucket, that is from the
next entry, closer to the center of the bucket.

\subsubsection{Array of primes}
\label{sec:primes}

The array of primes contains all primes (medium and large), with the
appropriate offsets, needed for sieving. The prime-offset pairs are
stored as two 32 bit unsigned integers (two
\texttt{uint32\textunderscore{}t}s in a \texttt{struct}). Enough
memory to store about $\frac{\sqrt{v}}{\log\sqrt{v}}$ pairs is
allocated.

The array of primes is filled during the initialization phase. The
values of the offsets $q$ change after finishing a segment.  Sometimes
pairs from the end and the beginning of a bucket are swapped (as
explained earlier), but this is all done in-place, that is, the array
itself does not need to be modified or copied, just the values stored.
All primes that belong to one circle as well as those that belong to
one bucket (except the broken bucket) are stored in a coherent and
continuous region of memory.

\subsubsection{Broken bucket}
\label{sec:broken-bucket}

For the circle $C^k$, the index of the broken bucket is $r=\max\{d:
\hat{b}_{k(k+1)/2+d} = M\}$, where $M = \max \{\hat{b}_d: {k(k+1)}/{2}
\le d < {(k+1)(k+2)}/{2}\}$.  The primes which belong to this bucket,
are the ones from the index $\hat{b}_{k(k+1)/2+r}$ and up to but not
including the prime at the end-pointer of $\hat{c}_k$ and the primes
from the end-pointer of $\hat{c}_{k-1}$ up to but not including the
prime with the index $\hat{b}_{k(k+1)/2+r'}$ where
$r'=(r+1)\mod(k+1)$, e.g.\ $\hat{b}_2$ in Figure \ref{fig:fc}.  This
idea also justifies the name circles, because logically the next prime
after the end-pointer of a circle is the first prime of the circle.
When the broken bucket is not actually broken, the value of
$\hat{b}_{k(k+1)/2+r'}$ is set to the index of the prime at the
end-pointer of $\hat{c}_{k-1}$, e.g.\ $\hat{b}_3$ in Figure
\ref{fig:fc}.

Every circle has a broken bucket and this has to be stored as a
variable for each circle.  The fact, that this can not be omitted is
not trivial, but if all primes of a circle are one bucket, then all
other buckets in that circle are empty.  Because empty buckets are
represented by having the same value as the following bucket, all
buckets in that circle, that is all entries of $\hat{b}_d$ which
represent the buckets of that circle, will have the same value. In
this situation the program can't decide which buckets are empty and
which one contains all the primes.

The broken bucket also moves around.  When sieving with a bucket, its
lower boundary is incremented.  If sieving with the broken bucket,
when the beginning of the bucket moves past the end of the circle (and
jumps to the beginning), the previous bucket (modulo $k+1$) becomes
the new broken bucket.

\section{Speeding up the algorithm}
\label{sec:speed-up-algo}

The roughly described implementation of sieving can be further refined
to gain valuable performance boosts.

\subsection{Small primes}
\label{sec:small-primes}

Sieving with primes $p<64$ can be sped up by not marking individual
bits, but rather applying bit masks. The subset of medium primes below
$64$ are called \emph{small} primes.

The \emph{AMD64}\footnote{AMD\texttrademark \ is a trademark of Advanced
  Micro Devices, Inc.} architecture processors with \emph{SSE2}
extension, have sixteen 64-bit general purpose R registers, and sixteen
128-bit XMM registers. For sieving with small primes, the generated
64bit wide bit masks are loaded in these registers and \texttt{or}-ed
together, to form the sieve table with small primes applied to it. The
masks are then \texttt{shift}-ed, to be applied to the next 64 bits
for R registers and 128 bits for XMM registers.

This is of course done in parallel, sieving by 128 bits at a time. The
XMM registers first 64 bits are loaded from the memory at the
beginning of sieving of a segment, and the last 64 bits are
\texttt{shift}-ed (just like the R registers are shifted ``mod
64''). There is two times as much sieving with the R registers than
with the XMM registers.

With the first four primes ``merged'' into two, all the small primes
can fit in the R and XMM registers, so the only memory access is
sequential and done once when starting and once when finished
sieving. The primes $3$ and $11$ are merged into $33$, that is, the
masks of $3$ and $11$ are combined at the initialization, and the
shifting needs to be done as if $33$ was the prime for sieving,
because the pattern repeats after $33$ bits. $5$ and $7$ are merged
into $35$ and treated similarly.

\subsection{Medium primes}
\label{sec:medium-primes}

As described earlier, for $(p,q)$ pairs with medium primes, sieving
starts from $q$ by increasing it by $p$ after sieving, until $q\ge
2^l$.  Then the sieving is finished for that segment, and the sieving
of the next segment starts from $q'=q-2^l$. There are two methods in
which this algorithm can be sped up.

\subsubsection{Wheel sieve}
\label{sec:wheel-sieve}

In the special case of the sieve of Eratosthenes, the ``wheel''
algorithm (described in \cite{wheel}) can be used to speed up the
program.  In some sense, it is an extension of the idea of not sieving
with number $2$.

Let $W$ be the set of the first few primes and $w=\prod_{p\in W}p$.
Sieving with the primes from $W$, sieves out a major part of the sieve
table, and these bits can be skipped.  Basically, when sieving with a
prime $p\not\in W$, the number $i$ needs to be sieved (marked) by $p$,
only if it is relative prime to $w$, that is, if $i$ is in the reduced
residue system modulo $w$ denoted by $W'$ (if $i\not\in W'$ some prime
from $W$ will mark it).  

Let $w_0<\cdots<w_{\varphi(w)-1}$ be the elements of $W'$, and
$\Delta_s$ the number of bits that should be skipped, after sieving
the bit with index congruent to $w_s$, that is $\Delta_i= (w +
w_{(i+1) \mod \varphi(w)} - w_i) \mod w$.  When $i$ is sieved out by
$p\not\in W$, instead of sieving $i+p$ next, the program can skip to
$i+\Delta_s p$ if $i\equiv w_s\pmod{w}$.

In the implementation, $W=\{2,3,5\}$, but $2$ is ``built in'' the
representation and this complicates thing a little bit: $w=15$,
$\varphi(w)=8$ and $w_0=0$, $w_1=3$, $w_2=5$, $w_3=6$, $w_4=8$,
$w_5=9$, $w_6=11$, $w_7=14$ are used (instead of $w=30$ and $1$, $7$,
$11$, $13$, $17$, $19$, $23$, $29$ for $w_s$).  For each prime the
offset $q$ is initialized to the value $q'+mp$, where $q'$ the offset
found using Lemma \ref{lem:q} and $m$ is the smallest non-negative
integer, so that $f2^l + q \equiv w_s \pmod{w}$ for some $0\le s
\le7$.

Let $p^{-1}$ be the inverse of $p$ modulo $15$, and $x$ a non-negative
integer so that:
\begin{equation}
f2^l+q + xp \equiv 7 \pmod{15}.\label{eq:x}
\end{equation}
Note that the residue class represented by $7$ is $15$, and that is
the class divisible both by $3$ and $5$, and it is in a sense the
``beginning'' of the pattern generated by the primes in $W$ when
sieving.  \eqref{eq:x} states that after $x$ times sieving (regularly)
with $p$, the offset is at the ``beginning'' of the pattern, that is,
in the residue class represented by $7$, so $y=7-x$ is the residue
class in which $q$ actually is.  $x\equiv(7-(f2^l+q))p^{-1} \pmod{15}$
can be obtained from \eqref{eq:x} by multiplying it with $p^{-1}$.

There is an index $0\le s \le 7$, so that $w_s=y$.  The index $s$,
indicating where in the pattern is the offset $q$, is stored beside
each $(p,q)$ pair. Before sieving with $p$, the array
$\Delta_0p,$ \ldots, $\Delta_7p$ is generated in memory, and a pointer is
set to $\Delta_sp$. After marking a bit, the offset is incremented by
the values found at that pointer, and the pointer is incremented
modulo $8$, which can be implemented very efficiently with a logical
\texttt{and} operation and a bit mask.  Also all prime-offset pairs
are stored on 64 bits and medium primes are $p<2^l$ (where $l$ is
never more than $30$), so at least 4 bits are not used where the index
$0\le s\le 7$ can fit.

\subsubsection{Branch misses}
\label{sec:branch-misses}

Another speed boost can be obtained by treating the \emph{larger
  medium primes} (near to $2^l$) differently.  This idea is somewhat
similar to the one used with circles, because it is based on the
observation that, if $\frac{2^l}{(k+1)} < p < \frac{2^l}{k}$, then $p$
sieves $k$ or $k+1$ times in one segment ($0<k\in\mathbb{N}$).  There
is a different procedure $g_k$, for each of the first few values of
$k$ (e.g.\ $0<k<16$). $g_k$ iterates the offset $k+1$ time, with the
last iteration implemented using conditional move (\texttt{cmov})
operations.  So, for each $k$, primes $\frac{2^l}{(k+1)}< p <
\frac{2^l}{k}$ are collected in a different array, and the procedure
$g_k$ is invoked for each prime in that array.  Having fixed number of
iterations with a conditional move is faster then a branch miss,
because the CPUs instruction stream is not interrupted.

\subsection{Large primes}
%\label{sec:large-primes}

The sieving with large primes is roughly described above.  Sieving
with one prime, putting it back, with the new offset, and modifying
the bucket boundary can be accomplished with only about 15 assembly
instructions using conditional moves (\texttt{cmov}). This is very
efficient, but other techniques can also be applied to reduce
execution time.

\subsubsection{Interleaved processing}
\label{sec:interl-proc}

Because the order in which the primes are processed doesn't matter,
the memory latency can be hidden by processing primes from both ends
of the bucket.  As described earlier, for each bucket, two
prime-offset pairs are loaded from the beginning and end of the
current bucket and one of them is processed.  To hide memory latency,
the next prime is loaded into place of the processed prime
\emph{while} the other one is being processed.  ``Processing a prime''
covers the following steps: marking the bit at the offset $q$;
determining if $q+p$ skips $k$ or $k+1$ segments; calculating the new
offset $q'\gets q+p-k2^l$, replacing the pair at the beginning of the
bucket and incrementing the bucket's lower boundary, for the former
case; or in the latter case, decrementing the pointer indicating the
finished primes at the top of the bucket, after replacing the pair at
the end of the bucket with the offset $q'\gets q+p-(k+1)2^l$.
Processing of one prime is about 15-18 assembly instructions, which is
approximately 5-6 clock cycles on today's processors, about the same
time needed for the other prime to be loaded in the registers.

\vspace*{-0.1cm}
\subsubsection{Broken bucket and loop unrolling}
\label{sec:broken-bucket-loop}

The well-known technique of loop unrolling can efficiently be used for
processing primes-offset pairs.  The core of the loop described above,
which processes two primes terminates when the difference between the
pointer from the beginning and end of the bucket becomes zero.  With
right \texttt{shift} and a logical \texttt{and} instructions, the
quotient $a$ and remainder $r$ of this difference when divided by
$2^h$ can be obtained (e.g.\ $h=4$ or $5$).  Then the loop core can
be executed $a$ times in batches of $2^h$ runs, and afterward $r$
times, thus reducing the time spent on checking if the difference is
zero.

The loop unrolling of the broken bucket is a bit trickier, but
manageable.  Let $\delta_1$ denote the difference between the
beginning of the bucket and the end of the circle, and $\delta_2$ the
difference between the beginning of the circle and the end of the
bucket.  The difference used for unrolling, as described above, would
be $\delta_1+\delta_2$ but the when modifying the pointers after
processing a prime, it would have to be checked, if it moves past the
beginning or end of the circle (to jump to the other side).  Instead,
the unrolling is applied to $\min\{ \delta_1, \delta_2\}$.  Since it
can't be predicted if the beginning or end pointer is going to be
modified, the values of $\delta_1$ and $\delta_2$, the maximum,
quotient $a$ and reminder $r$ have to be reevaluated after each batch,
until one of the pointers ``jump'' to the other side.  Then the bucket
will no longer be broken, so the simpler unrolling described above can
be applied.

\section{Results}
\label{sec:results}
% purple dashed a0F80 m06 \\
% orange dashed a0FF0 m07 \\
% cyan dashed i06E8 m10 \\
% gray dashed i0662 m16 \\
% red lime m22 \\
% cl07 blue m23 \\
The program was run on (a single core of) two computers referred to by
their names \emph{lime} and \emph{complab07}.  The goal was to
supersede the implementation found in the speed comparison chart of
\cite{tos}, but the results can not be compared directly, because of
the differences in hardware.  Our implementation, running on
\emph{lime} would come in 7th and \emph{complab07} the 16th in the
speed comparison chart, but with significantly slower memory.

\begin{table}[htb]
  \begin{center}
    \begin{tabular}{|l|l|l|l|l|l|l|}
\hline
$e$ & lime & cl07 & [a0F80] & [a0FF0] & [i06E8] & [a0662] \\ \hline
1e12 & 1.45 & 2.09 & 0.57 & 0.68 & 1.28 & 1.07 \\ \hline
2e12 & 1.45 & 2.10 & 0.64 & 0.75 & 1.37 & 1.17 \\ \hline
5e12 & 1.45 & 2.27 & 0.74 & 0.85 & 1.48 & 1.29 \\ \hline
1e13 & 1.45 & 2.28 & 0.80 & 0.92 & 1.57 & 1.38 \\ \hline
2e13 & 1.46 & 2.38 & 0.86 & 0.99 & 1.66 & 1.47 \\ \hline
5e13 & 1.45 & 2.46 & 0.95 & 1.08 & 1.76 & 1.59 \\ \hline
1e14 & 1.46 & 2.5 & 1.01 & 1.14 & 1.85 & 1.67 \\ \hline
2e14 & 1.45 & 2.58 & 1.08 & 1.21 & 1.94 & 1.76 \\ \hline
5e14 & 1.68 & 2.70 & 1.16 & 1.29 & 2.04 & 1.87 \\ \hline
1e15 & 1.63 & 2.80 & 1.22 & 1.36 & 2.11 & 1.96 \\ \hline
2e15 & 1.71 & 2.86 & 1.28 & 1.42 & 2.19 & 2.06 \\ \hline
5e15 & 1.83 & 2.95 & 1.37 & 1.50 & 2.29 & 2.20 \\ \hline
1e16 & 1.89 & 3.04 & 1.42 & 1.56 & 2.37 & 2.32 \\ \hline
2e16 & 1.95 & 3.13 & 1.49 & 1.63 & 2.45 & 2.47 \\ \hline
5e16 & 2.03 & 3.25 & 1.58 & 1.75 & 2.57 & 2.72 \\ \hline
1e17 & 2.08 & 3.34 & 1.64 & 1.86 & 2.67 & 2.93 \\ \hline
2e17 & 2.15 & 3.45 & 1.72 & 2.02 & 2.79 & 3.23 \\ \hline
5e17 & 2.22 & 3.60 & 1.84 & 2.21 & 2.96 & 3.66 \\ \hline
1e18 & 2.29 & 3.75 & 1.99 & 2.39 & 3.13 & 4.03 \\ \hline
2e18 & 2.33 & \multicolumn{1}{r|}{0} & 2.26 & 2.61 & 3.31 & 4.52 \\ \hline
\end{tabular}

%%% Local Variables: 
%%% mode: latex
%%% TeX-master: "a-jarai-e-vatai-COLS"
%%% End: 

  \end{center}
  \caption{Execution times in seconds for intervals of 
    $10^9\approx 2^{30}$ with the midpoint at $10^e$\label{tab:speed}}
\end{table}

\begin{table}[htb]
  \begin{center}
    {\small
\begin{tabular}{|l|l|}
  \hline
  lime  & 2000MHz Intel Core2 Duo (E8200) 
  model 23, stepping 6, DDR2 666MHz\!\!\!\\ \hline
  cl07 & 1595MHz AMD Athlon64 3500+,
  model 47, stepping 2, DDR 200MHz \\ \hline
  a0F80 & 2600MHz 6-Core AMD Opteron (Istanbul), 
  model 8, stepping 0, DDR2 \\ \hline
  a0FF0 & 2210MHz Athlon64 (Winchester), 
  model 15, stepping 0, DDR 333\\ \hline
  i06E8 & 1830MHz T2400 (Core Duo),
  model 14, stepping 8, DDR2 533\\ \hline
  a0662 & 1669MHz Athlon (Palomino), 
  model 6, stepping 2, DDR 333\\ \hline
\end{tabular}
}
%%% Local Variables: 
%%% mode: latex
%%% TeX-master: "a-jarai-e-vatai-COLS.tex"
%%% End: 

  \end{center}
  \caption{The CPU and memory configurations of the computers 
    used for measurements   \label{tab:specs}}
\end{table}

\begin{figure}[htb]
  \centering
  \includegraphics[scale=1.5]{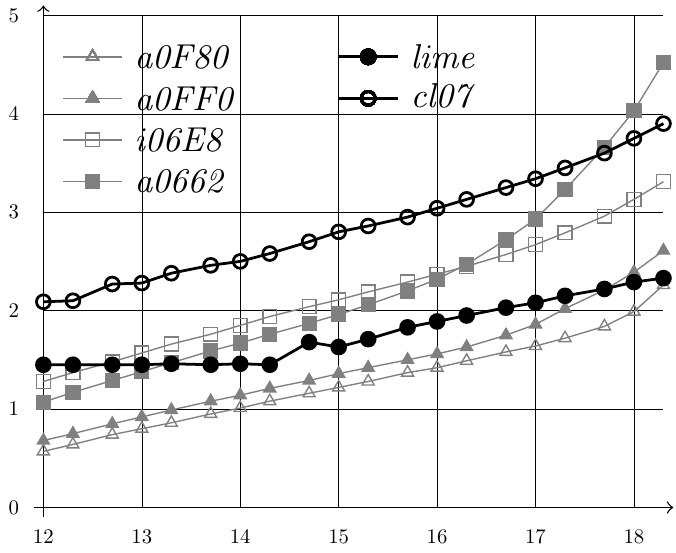}
  \caption{Speed comparison chart}
  \label{fig:plot}
\end{figure}

Compared to the 666MHz DDR2 memory of \emph{lime}, the first five or
so computers from the speed comparison chart have memory speeds of
800MHz and above, and with a better memory our implementation could
probably compete better.  But the real improvement can be seen, when
running on older hardware, like \emph{complab07}.  With the slow
memory of 200MHz, the plot in Figure \ref{fig:plot}, is much
flatter and closer to the theoretical speed of $n \log\log n$ than for
example the similar \emph{i0662} with a faster 333MHz RAM.

It should also be noted, that the major part of execution is spent on
sieving with medium primes and more optimization is desired out of
that part of the algorithm. We also had some unexpected difficulties
optimizing assembly code for the Intel processors, due to confusing
documentation and slow execution of the \texttt{bts} (bit test set)
instruction.

For \emph{lime}, \emph{complab07} and some computers from \cite{tos},
Table \ref{tab:speed} shows the time needed to sieve out an interval
(represented by $2^{30}\approx 10^{9}$ bits, with its midpoint at
$10^e$).  This data is plotted out in Figure \ref{fig:plot}: values of
$e$ are represented on the horizontal, execution times in seconds on
the vertical axis.

\section{Future work}
\label{sec:future-work}

The program was originally written for verifying the Goldbach
conjecture, but only the sieve for generating the table of primes was
finished and measured because that takes up the majority of the work
for the verification.  The completion of the verification application
would be desirable.  Also the current implementation supports sieving with primes 
only up to 32 bits, on current architectures, the implementation of
sieving with primes up to 64 bits would not be a problem.

Most of the techniques described here (except the wheel algorithm),
especially the use of circles and buckets can be applied for a wider
range of sieving algorithms.  For example, in \cite{gnfs} a similar
attempt is made to exploit the cache hierarchy, but the behavior of
the large primes is more predictable with our method, and even an
implementation for processors not designed for sieving algorithms is 
possible.  Therefore the multiple polynomial quadratic sieve, on the
Cell Broadband Engine Architecture\footnote{Cell Broadband
  Engine\texttrademark \ is a trademark of Sony Computer Entertainment
  Incorporated}, with 128K byte cache (i.e.\ Local Store) controlled
by the user via DMA, can be implemented efficiently.  Further
performance can be gained by combining buckets and circles with
parallel processing: sieving with different polynomials on different
processors for MPQS-like algorithms, and a segment-wise pipeline-like
processing for algorithms similar to the sieve of Eratosthenes.

% \begin{lemma}
% This is a lemma. 
% \end{lemma}

% \begin{theorem} 
% This is a new theorem.
% \end{theorem}

% %% For references use the following format.  For citation use e. g.
% For a single reference use:  \cite{hajnal} 

% %% For more references: 

% \medskip 
% For more than one references use:
% \cite{hajnal,sefrioui,genge}. Other references:
% \cite{csz,ferr,merwe,ruff,xiao}

% \medskip 
% All references must be cited in the text of your paper. In
% references, after a red arrow will appear the hyperlink backreferences
% to the pages where they have been cited.  This can be used for
% verification if you really cited all references.

% \medskip
% For the usual abbreviation of theoretical journals you can use the address: \href{http://www.ams.org/msnhtml/serials.pdf}{http://www.ams.org/msnhtml/serials.pdf}.

% \medskip
% For acknowledgements instead of footnote, please, use an acknowledgement section at the end of your paper.

\section*{Acknowledgements}

% \medskip

The Project is supported by the European Union and co-financed by the
European Social Fund (grant agreement no. T\'AMOP
4.2.1/B-09/1/KMR-2010-0003).

% \medskip
% An example of a new reference:  \cite{hajnal}.

%% All references must be cited in the text of your paper. In
%% references, after a red arrow will appear the hyperlink
%% backreferences to the pages where they have been cited.

%% For the usual abbreviation of theoretical journals you can use the
%% address: http://www.ams.org/msnhtml/serials.pdf

\bigskip
\rightline{\emph{Received: October 2, 2011 {\tiny \raisebox{2pt}{$\bullet$\!}} Revised: November 8, 2011}} %% to be completed by the editor

\end{document}